\def\bx{{\mathbf {x} }}
\def\bsxi{{\boldsymbol {\xi} }}
\def\dd{{\boldsymbol d}}
\newcommand{\bfi}{\bfseries\itshape}
\newcommand{\rem}[1]{}
\DeclareMathOperator{\ad}{ad}
\newcommand{\lform}[2]{{\big\langle {#1}\, , \, {#2}\big\rangle}}
\newcommand{\Lform}[2]{{\Big\langle {#1}\, , \, {#2}\Big\rangle}}
\newcommand{\id}{{\mathrm{id}}}
\newcommand{\ti}{\times}
\newcommand{\Ad}{\text{Ad}}
\renewcommand{\ad}{\text{ad}}
\newcommand{\de}{\delta}
\newcommand{\om}{\omega}
\newcommand{\al}{\alpha}
\providecommand{\U}[1]{\protect\rule{.1in}{.1in}}
\newtheorem{theorem}{Theorem}
\newtheorem{corollary}[theorem]{Corollary}
\newtheorem{definition}[theorem]{Definition}
\newtheorem{proposition}[theorem]{Proposition}
\newtheorem{remark}[theorem]{Remark}
\def\thefigure{\thesection.\@arabic\c@figure}
\def\fps@figure{h, t}
\def\thetable{\thesection.\@arabic\c@table}
\def\fps@table{h, t}
\begin{document}

\title{Stochastic Metamorphosis in Imaging Science}
\author{Darryl D. Holm,\\ \\
In honour of David Mumford on his 80$^{\rm th}$ birthday!}

\address{DDH: Department of Mathematics, Imperial College, London SW7 2AZ, UK.}

\date{\today}

\begin{abstract}
In the pattern matching approach to imaging science, the process of \emph{metamorphosis} in template matching with dynamical templates was introduced in  \cite{ty05b}. 
In \cite{HoTrYo2009} the metamorphosis equations of \cite{ty05b} were recast into the Euler-Poincar\'e variational framework of \cite{HoMaRa1998} and shown to contain the
equations for a perfect complex fluid \cite{Holm2002}. This result related the data structure underlying the process of metamorphosis in image matching to the physical concept of order parameter in the theory of complex fluids \cite{GBHR2013}. In particular, it cast the concept of Lagrangian paths in imaging science as deterministically evolving curves in the space of diffeomorphisms acting on image data structure, expressed in Eulerian space. (In contrast, the landmarks in the standard LDDMM approach are Lagrangian.)

For the sake of introducing an Eulerian uncertainty quantification approach in imaging science, we extend the method of metamorphosis to apply to image matching along \emph{stochastically} evolving time dependent curves on the space of diffeomorphisms. The approach will be guided by recent progress in developing stochastic Lie transport models for uncertainty quantification in fluid dynamics in \cite{holm2015variational,CrFlHo2017}.

\end{abstract}

\maketitle

\tableofcontents

\eject

\section{Introduction}

In recent work \cite{AHS2017,arnaudon2016stochastic2}, a new method of modelling variability of shapes has been introduced. This method uses a theory of stochastic perturbations consistent with the geometry of the diffeomorphism group corresponding to the Large Deformation Diffeomorphic Metric Mapping framework (LDDMM, see \cite{younes_shapes_2010}).  In particular, the method introduces stochastic Lie transport along stochastic curves in the diffeomorphism group of smooth invertible transformations. It models the development of variability as observed, for example, when human organs are influenced by disease processes, as analysed in computational anatomy \cite{younes_evolutions_2009}. It also provides a framework including a Hamiltonian formulation for quantifying uncertainty in the development of shape atlases in computational anatomy. Hamiltonian methods for deterministic computational anatomy were recently reviewed in \cite{MTY2015}.

The theory developed in \cite{AHS2017,arnaudon2016stochastic2} treats LDDMM as a flow and uses methods based on stochastic fluid dynamics introduced in \cite{holm2015variational}. It addresses the problem of uncertainty quantification by introducing spatially correlated noise which respects the geometric structure of the data. 
Thus, the method provides a new way of characterising stochastic variability of shapes using spatially correlated noise in the context of the standard LDDMM framework. Numerical methods for addressing stochastic variability of shapes with landmark data structure have also been developed in  \cite{holm2016stochastic,HoTy2016,AHS2017,arnaudon2016stochastic2}.

Although the examples were limited to landmark dynamics in the work \cite{AHS2017,arnaudon2016stochastic2}, it was clear that Lie-transport noise can be applied to any of the data structures used in the LDDMM framework, because it is compatible with the transformation theory on which LDDMM is based. 
The LDDMM theory was initiated by \cite{trouve_infinite_1995,christensen_deformable_1996,dupuis_variational_1998,MiTrYo2002,beg2005computing} building on the pattern theory of \cite{grenander_general_1994}. The LDDMM approach models shape comparison (registration) as dynamical transformations from one shape to another whose data structure is defined as a tensor valued smooth embedding. These shape transformations are expressed in terms of the action of diffeomorphic flows, regarded as time dependent curves of smooth transformations of shape spaces. 
This provides a unified approach to shape modelling and shape analysis, valid for a range of structures such as landmarks, curves, surfaces, images, densities and tensor-valued images. 
For any such data structure, the optimal shape deformations are described via the Euler-Poincar\'e equation of the diffeomorphism group, usually referred to as the EPDiff equation \cite{holm1998euler,HoMa2004,younes_evolutions_2009}. 
The work \cite{AHS2017,arnaudon2016stochastic2} showed how to obtain a stochastic
EPDiff equation valid for any data structure, and in particular for the finite dimensional representation of images based on landmarks.
For this purpose, the work \cite{AHS2017,arnaudon2016stochastic2} followed the Euler-Poincar\'e derivation of LDDMM of \cite{bruveris2011momentum} based on geometric mechanics \cite{MR1994,holm_geometric_2011} and the use of momentum maps to represent images and shapes.
The introduction of Lie-transport noise into the EPDiff equation was implemented as cylindrical noise, obtained by pairing the deterministic momentum map the sum over eigenvectors of the spatial covariance of Stratonovich noise, each with its own Brownian motion.

The work \cite{AHS2017,arnaudon2016stochastic2} was not the first to consider stochastic evolutions in LDDMM.
Indeed, \cite{TrVi2012,vialard2013extension} and more recently \cite{marsland2016langevin} had already investigated the possibility of stochastic perturbations of landmark dynamics. 
In the earlier works, the noise was introduced into the landmark momentum equations, as though it were an external random force acting on each landmark independently. 
In \cite{marsland2016langevin}, an extra dissipative force was added to balance the energy input from the noise and to make the dynamics correspond to a certain type of heat bath used in statistical physics.
In contrast, the work \cite{AHS2017,arnaudon2016stochastic2} instead introduced Eulerian Stratonovich noise  into the reconstruction relation used to find the deformation flows from the action of the velocity vector fields on their corresponding momenta, which are solutions of the EPDiff equation \cite{HoMa2004,younes_shapes_2010}. 

As shown in \cite{AHS2017,arnaudon2016stochastic2}, this derivation of stochastic models is compatible with the Euler-Poincar\'e constrained variational principles, it preserves the momentum map structure and yields a stochastic EPDiff equation with a novel type of multiplicative noise, depending on both the gradient and the magnitude of the solution. 
The model in \cite{AHS2017,arnaudon2016stochastic2} was based on the previous works \cite{holm2015variational, arnaudon2016noise}, where, respectively, stochastic perturbations of infinite and finite dimensional mechanical systems were considered. 
The Eulerian nature of this type of noise implies that the noise correlation depends on the image position and not, as for example in \cite{TrVi2012,marsland2016langevin}, on the landmarks themselves.  
This property explains why the noise is compatible with any data structure while retaining the freedom in the choice of its spatial correlation.  

The present work extends the Euler-Poincar\'e variational framework for the metamorphosis approach of \cite{ty05b,HoTrYo2009} from the deterministic setting to the stochastic setting. Section \ref{determ-met-review} reviews the derivation of the deterministic metamorphosis equations as cast by \cite{HoTrYo2009} into the Euler-Poincar\'e variational framework of \cite{HoMaRa1998}, as well as several other developments, including the Hamilton-Pontryagin principle and two different  Hamiltonian formulations of deterministic metamorphosis. Section \ref{stoch-met-review} introduces metamorphosis by stochastic Lie transport and traces out its preservation and modification of the deterministic mathematical structures reviewed in Section \ref{determ-met-review}.

Thus, for the sake of potential applications in uncertainty quantification, this paper extends the method of metamorphosis for image registration to enable its application  to image matching along stochastically time dependent curves on the space of diffeomorphisms.

\section{Review of metamorphosis by deterministic Lie transport}  \label{determ-met-review}

In the pattern matching approach to imaging science, the process of ``metamorphosis'' in template matching with dynamical templates was introduced in  \cite{ty05b}. 
In \cite{HoTrYo2009} the metamorphosis equations of \cite{ty05b} were recast into the Euler-Poincar\'e variational framework of \cite{HoMaRa1998} and shown to contain the
equations for a perfect complex fluid \cite{Holm2002}. This result connected the data structure underlying the process of metamorphosis in image matching to the physical concept of order parameter in the theory of complex fluids. After developing the
general theory in \cite{HoTrYo2009}, various examples were reinterpreted, including point set, image and density metamorphosis. Finally, the issue was discussed of matching measures with metamorphosis, for which existence theorems for the initial and boundary value problems were provided. For more recent developments the the metamorphosis equations as well as numerical methods especially designed for metamorphosis, see \cite{RY2016}.

Let $N$ be manifold, which is acted upon by a Lie
group $G$. The manifold $N$ contains what we can refer to as ``deformable
objects'' and $G$ is the group of deformations, which is the group of
diffeomorphisms acting on the manifold $N$ in our applications. Several examples for the space $N$ were developed in the Euler-Poincar\'e context in \cite{HoTrYo2009}.

\begin{definition}
A {\bfi
metamorphosis} \cite{ty05b} is a pair of curves $(g_t,\,\eta_t)\in G \times N$ parameterized by time $t$,
with $g_0 = \id$. Its {\bfi image} is the curve $n_t\in N$ defined by the action 
$n_t = g_t.\eta_t$, where subscript $t$ indicates explicit dependence on time, $t$. 
The quantities $g_t$ and $\eta_t$ are called, respectively, the {\bfi deformation part} of the metamorphosis, and its {\bfi template part}. When $\eta_t$ is
constant, the metamorphosis is said to be a {\bfi pure
deformation}. In the general case, the image is a combination of a
deformation and template variation.
\end{definition}

Before introducing stochasticity, the next section provides notation and definitions for the general problem of metamorphoses in the deterministic case. Several derivations of the fundamental metamorphosis equations are given, in order to explore the various formulations of the problem from different perspectives. 


\subsection{Notation and Euler-Poincar\'e theorem for the deterministic case}
Following \cite{ty05b,HoTrYo2009}, we will use either letters $\eta$ or $n$ to denote elements of $N$,
the former being associated to the template part of a metamorphosis,
and the latter to its image. 

The variational problem we shall study optimizes over metamorphoses
$(g_t, \eta_t)$ by minimizing, for some Lagrangian $L:TG\times  TN \to \mathbb{R}$, the action integral
\begin{equation}
\label{eq:lag}
S = \int_0^1 L(g_t, \dot g_t, \eta_t, \dot\eta_t) dt\,,
\end{equation}
with fixed endpoint conditions for the initial and final images $n_0$
and $n_1$ (with $n_t = g_t\eta_t$) and $g_0 = \id_G$. That is, the
images $n_t$ are constrained at the end-points, with the initial condition $g_0 = \id$.

Let $\mathfrak
g$ denote the Lie algebra of the Lie group $G$. We will consider  Lagrangians
defined on $TG\times TN$, that satisfy the following invariance 
conditions: there exists a function $\ell$ defined on $\mathfrak g \ti
TN$ such that
\begin{equation}
\label{eq:lag-inv}
L(g, U_g, \eta, \xi_\eta) = \ell(U_gg^{-1}, g\eta, g\xi_\eta).
\end{equation}
In other words, $L$ is invariant under the right action of $G$ on $G\times
N$ defined by $(g, \eta)h = (gh, h^{-1}\eta)$. 

For a metamorphosis $(g_t, \eta_t)$, we therefore have a reduced Lagrangian, upon defining $u_t = \dot g_t
g_t^{-1}$, $n_t = g_t\eta_t$ and $\nu_t = g_t \dot \eta_t$, given by
\begin{equation}
\label{eq:reduced-lag}
L(g_t, \dot g_t, \eta_t, \dot\eta_t) = \ell(u_t, n_t, \nu_t).
\end{equation}

\bigskip

The Lie derivative with respect to a vector field $X$ will be denoted
$\mathcal L_X$. The Lie algebra of $G$ is identified with the set of
right invariant vector fields $U_g = ug$, $u\in T_\id G= \mathfrak g$, $g\in G$,
and we will use the notation $\mathcal L_u = \mathcal L_U$. 

The Lie bracket $[u,v]$ on the Lie algebra of smooth vector fields $\mathfrak g$ is defined by
\begin{equation}
\label{eq:Lie-brkt}
\mathcal L_{[u, v]} = -(\mathcal L_u\mathcal L_v - \mathcal
L_v\mathcal L_u)
\end{equation}
and the associated adjoint operator is $\ad_u v = [u,v]$.
Letting $I_g(h) = ghg^{-1}$ and $Ad_vg = \mathcal L_v I_g(\id)$, we
also have $\ad_u v = \mathcal L_u(\Ad_v)(\id)$. When $G$ is a group of
diffeomorphisms, this yields $ad_u v = du\, v -dv\, u$. 

The pairing between a linear form $\mu$ and a vector field $u$ will be denoted
$\lform{\mu}{u}$. Duality with respect to this pairing will be denoted
with an asterisk. For example, $N^*$ is the dual space of the manifold $N$ with respect to this pairing. 

When $G$ acts on a manifold $\tilde N$, the diamond  operator $(\diamond)$ is defined on $\tilde N^* \times \tilde N$ and takes values in the dual Lie algebra $\mathfrak g^*$. That is, $\diamond: \tilde N^* \times \tilde N \to \mathfrak g^*$.
For $\tilde n^*\in \tilde N^*$ and $\tilde n\in \tilde N$ the diamond operation is defined by 
\begin{equation}
\label{eq:diamond-def}
\lform{\tilde n^* \diamond \tilde n}{u}_{\mathfrak g} 
:= - \lform{\tilde n^*}{u\tilde n}_{T\tilde N}\,,
\end{equation}
where the action of a vector field $u\in \mathfrak g$ on $\tilde n\in \tilde N$ is denoted by simple concatenation, $u\tilde n\in T\tilde N$. For example, the Lie algebra action of the vector field $u\in \mathfrak g$ on $\tilde n\in \tilde N$ is denoted $u\tilde n = \mathcal{L}_u\tilde n \in T\tilde N$. Subscripts on the pairings in the definition \eqref{eq:diamond-def} indicate, as follows, $\lform{\,\cdot}{\cdot\,}_{\mathfrak g}: {\mathfrak g}^*\times {\mathfrak g} \to \mathbb{R}$ and $\lform{\,\cdot}{\cdot\,}_{T\tilde N}: {\tilde N}^*\times {T\tilde N} \to \mathbb{R}$. In what follows, for brevity of notation we will often suppress these subscripts $t$, except where we wish to emphasise the presence or absence of explicit time dependence.  Suppressing these subscripts when explicit time dependence is understood should cause no confusion. 
\bigskip

\begin{theorem}[Euler-Poincar\'e theorem]\label{equiv-thm-det}
With the preceding notation, the following four statements are equivalent for a metamorphosis Lagrangian $L$ that is invariant under the right action of $G$ on $G\times N$ defined by $(g, \eta)h = (gh, h^{-1}\eta)$, with fixed endpoint conditions for the initial and final images $n_0$ and $n_1$:
\begin{enumerate}
\item [{\bf i} ] 
Hamilton's variational principle 
\begin{equation} \label{hamiltonprinciple}
\delta S = \delta \int _{0} ^{1} L(g_t, \dot g_t, \eta_t, \dot\eta_t) dt = 0
\quad\hbox{for}\quad
L(g_t, \dot g_t, \eta_t, \dot\eta_t) = \ell(\dot{g}_tg_t^{-1}, g_t\eta_t, g_t\dot{\eta}_t)
\end{equation}
holds, for variations $\delta g_t$
of $ g_t $ and $\delta \eta_t$
of $ \eta_t $ vanishing at the endpoints.
\item [{\bf ii}  ] $g_t$ and $\eta_t$ satisfy the Euler--Lagrange
equations for $L$ on $TG\times TN$.
\item [{\bf iii} ]  The constrained variational principle
\begin{equation} \label{variationalprinciple}
\delta \mathcal{S} = \delta \int_0^1 \ell(u_t, n_t, \nu_t) dt = 0
\end{equation}
holds for Lagrangian $\ell$ defined on $\mathfrak g \ti
TN$ using variations of $u_t = \dot g_t
g_t^{-1}$, $n_t = g_t\eta_t$ and $\nu_t = g_t \dot \eta_t$ of the form
\begin{equation} \label{epvariations}
\delta u = \dot\xi_t - \ad_{u_t}\xi_t
, \quad
\delta n = \varpi_t + \xi_t n_t 
,\quad\hbox{and}\quad
\delta \nu = \dot\varpi_t + \xi_t\nu_t - u_t\varpi_t
,
\end{equation}
where $\xi_t  = \de g_t g_t^{-1}$, $\varpi_t = g_t\de\eta_t$ and these variations vanish at the endpoints.
\item [{\bf iv}] The {\bfi Euler--Poincar\'{e}}
equations hold on $\mathfrak{g} \times TN$\medskip
\begin{align}
\label{eq:meta.2-thm}
\begin{split}
&\frac{\partial}{\partial t} \frac{\delta \ell}{\delta u} + \ad^*_{u_t}
\frac{\delta \ell}{\delta u} + \frac{\delta \ell}{\delta n} \diamond n_t
+ \frac{\delta \ell}{\delta \nu} \diamond \nu_t = 0,
\\ \\&
\frac{\partial}{\partial t} \frac{\delta \ell}{\delta \nu} + u_t \star \frac{\delta \ell}{\delta \nu} - \frac{\delta \ell}{\delta
n}= 0,
\end{split}
\end{align}
with auxiliary equation 
\begin{equation} \label{eq:aux-thm}
\dot n_t = \nu_t + u_tn_t
,\end{equation}
obtained from the definitions 
$u_t = \dot g_tg_t^{-1}$ and $n_t = g_t\eta_t$, with $\nu_t = g_t \dot \eta_t$, provided the endpoint condition holds, that
\begin{equation} \label{eq:endptcond}
\frac{\delta \ell}{\delta u}(1) + \frac{\delta \ell}{\delta \nu}(1) \diamond n_1 = 0
,\end{equation}
at time $t=1$.
\end{enumerate}
\end{theorem}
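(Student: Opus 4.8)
The plan is to establish the four-way equivalence as a chain of three links: (i)$\Leftrightarrow$(ii) is the classical correspondence between Hamilton's principle and the Euler--Lagrange equations on the unreduced space $TG\times TN$; (i)$\Leftrightarrow$(iii) is the reduction step, in which the invariance \eqref{eq:lag-inv} lets one pass between the unreduced action $S$ and the reduced action $\mathcal S$ while matching variations $(\de g_t,\de\eta_t)$ to the constrained variations \eqref{epvariations}; and (iii)$\Leftrightarrow$(iv) is the computation of the reduced variational derivative, integration by parts in $t$, and collection of the coefficients of the independent variations $\xi_t$ and $\varpi_t$. The first link is standard and needs only the remark that $\ell$ is recovered from $L$ through \eqref{eq:reduced-lag}, so critical points correspond.

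For (i)$\Leftrightarrow$(iii) I would introduce a two-parameter family $g(t,s)$, $\eta(t,s)$ with $g(t,0)=g_t$, $\eta(t,0)=\eta_t$, set $\xi_t=\de g_t g_t^{-1}$ and $\varpi_t=g_t\de\eta_t$, and compute the induced variations of $u_t=\dot g_t g_t^{-1}$, $n_t=g_t\eta_t$, $\nu_t=g_t\dot\eta_t$ directly. Equality of the mixed partials $\prt_t\prt_s=\prt_s\prt_t$ applied to $g$ yields $\de u=\dot\xi_t-\ad_{u_t}\xi_t$; differentiating $n=g\eta$ gives $\de n=\xi_t n_t+\varpi_t$; and differentiating $\nu=g\dot\eta$, together with the identity $\dot\varpi_t=u_t\varpi_t+g_t\de\dot\eta_t$, gives $\de\nu=\dot\varpi_t+\xi_t\nu_t-u_t\varpi_t$. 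These are exactly \eqref{epvariations}. Since $g_t$ is invertible along the reference curve, the assignment $(\de g_t,\de\eta_t)\mapsto(\xi_t,\varpi_t)$ is a time-wise linear isomorphism, so arbitrary admissible variations downstairs correspond to the admissible $\xi_t,\varpi_t$ upstairs, and $S$ and $\mathcal S$ share the same critical points.

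For (iii)$\Leftrightarrow$(iv) I would insert \eqref{epvariations} into $\de\mathcal S=\int_0^1\big(\lform{\de\ell/\de u}{\de u}+\lform{\de\ell/\de n}{\de n}+\lform{\de\ell/\de\nu}{\de\nu}\big)\,dt$, integrate by parts in $t$ on the $\dot\xi_t$ and $\dot\varpi_t$ terms, and rewrite the rest with the dual operators: $\ad^*_{u_t}$ for the $\ad_{u_t}\xi_t$ contribution, the diamond \eqref{eq:diamond-def} for the $\xi_t n_t$ and $\xi_t\nu_t$ contributions, and the dual action $\star$, defined by $\lform{u\star m}{\varpi}=\lform{m}{u\varpi}$, for the $u_t\varpi_t$ contribution. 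Collecting the coefficient of $\xi_t$ reproduces the first line of \eqref{eq:meta.2-thm}, and the coefficient of $\varpi_t$ reproduces the second line; the auxiliary relation \eqref{eq:aux-thm} is not variational but follows at once from $\dot n_t=\dot g_t\eta_t+g_t\dot\eta_t=u_tn_t+\nu_t$. As $\xi_t$ and $\varpi_t$ are independent and free in the interior, vanishing of $\de\mathcal S$ is equivalent to both equations holding.

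The step I expect to be the main obstacle is the careful bookkeeping of the boundary terms, which is where metamorphosis departs from a textbook Euler--Poincar\'e reduction. Because only the images $n_0,n_1$ are fixed and $g_0=\id$, the endpoint data give $\xi_0=0$ and $\varpi_0=0$, but leave $\de g_1$, hence $\xi_1$, free, subject only to the single relation $\varpi_1+\xi_1 n_1=\de n_1=0$. Substituting $\varpi_1=-\xi_1 n_1$ into the surviving boundary contribution $\lform{\de\ell/\de u(1)}{\xi_1}+\lform{\de\ell/\de\nu(1)}{\varpi_1}$ and using \eqref{eq:diamond-def} to convert $\lform{\de\ell/\de\nu(1)}{-\xi_1n_1}$ into $\lform{(\de\ell/\de\nu(1))\diamond n_1}{\xi_1}$ collapses it to $\lform{\de\ell/\de u(1)+(\de\ell/\de\nu(1))\diamond n_1}{\xi_1}$. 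Arbitrariness of $\xi_1$ then forces precisely the natural boundary condition \eqref{eq:endptcond}, and conversely that condition is exactly what annihilates the boundary term so that (iii) and (iv) are genuinely equivalent. Tracking signs consistently through the right-invariant conventions for $\ad$, $\diamond$ and $\star$ is the only other place demanding care.
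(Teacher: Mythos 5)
Your proposal is correct and takes essentially the same route as the paper: the paper's own argument is exactly your (iii)$\Leftrightarrow$(iv) computation, deriving the constrained variations \eqref{epvariations} from $\xi_t=\delta g_t g_t^{-1}$, $\varpi_t=g_t\delta\eta_t$, integrating by parts, and converting the surviving $t=1$ boundary term via the diamond pairing into the natural endpoint condition \eqref{eq:endptcond}, with (i)$\Leftrightarrow$(ii) dismissed as classical. Your explicit observation that $(\delta g_t,\delta\eta_t)\mapsto(\xi_t,\varpi_t)$ is a time-wise linear isomorphism (and your corrected identity $\dot\varpi_t=u_t\varpi_t+g_t\delta\dot\eta_t$, which the paper misprints as $\dot\varpi_t=u_t\varpi_t+g_t\dot\eta_t$) merely tightens the same argument.
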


\begin{corollary}[Coadjoint motion] \label{EPcorollary-coadjoint}
Equations \eqref{eq:meta.2-thm} and the auxiliary equation \eqref{eq:aux-thm} for $n_t$ together imply the following coadjoint motion equation,
\begin{align}
\frac{\partial}{\partial t} \Big(\frac{\delta \ell}{\delta u} +
\frac{\delta \ell}{\delta \nu}\diamond n\Big) 
+ \ad_{u_t}^*\Big(\frac{\delta \ell}{\delta u} +
\frac{\delta \ell}{\delta \nu}\diamond n\Big) = 0.
\label{coadjoint-mot1}
\end{align}

\end{corollary}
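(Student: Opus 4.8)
The plan is to differentiate the total momentum $M := \dfrac{\delta\ell}{\delta u} + \dfrac{\delta\ell}{\delta\nu}\diamond n$ directly in time and to remove, one equation at a time, every term that is not already in coadjoint form, using the two Euler--Poincar\'e equations \eqref{eq:meta.2-thm} together with the auxiliary equation \eqref{eq:aux-thm}. First I would use the bilinearity of $\diamond$ to write $\partial_t\big(\tfrac{\delta\ell}{\delta\nu}\diamond n\big) = \partial_t\tfrac{\delta\ell}{\delta\nu}\diamond n + \tfrac{\delta\ell}{\delta\nu}\diamond\dot n$, so that
\begin{equation*}
\partial_t M + \ad^*_{u}M = \Big(\partial_t\tfrac{\delta\ell}{\delta u}+\ad^*_u\tfrac{\delta\ell}{\delta u}\Big) + \partial_t\tfrac{\delta\ell}{\delta\nu}\diamond n + \tfrac{\delta\ell}{\delta\nu}\diamond\dot n + \ad^*_u\Big(\tfrac{\delta\ell}{\delta\nu}\diamond n\Big).
\end{equation*}

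Next I would eliminate the terms in sequence. The first parenthesis is replaced using the first equation of \eqref{eq:meta.2-thm}, turning it into $-\tfrac{\delta\ell}{\delta n}\diamond n - \tfrac{\delta\ell}{\delta\nu}\diamond\nu$. Then I would use the second equation of \eqref{eq:meta.2-thm} in the form $\tfrac{\delta\ell}{\delta n}=\partial_t\tfrac{\delta\ell}{\delta\nu}+u\star\tfrac{\delta\ell}{\delta\nu}$; substituting this into $-\tfrac{\delta\ell}{\delta n}\diamond n$ makes the two $\partial_t\tfrac{\delta\ell}{\delta\nu}\diamond n$ contributions cancel, leaving only $-(u\star\tfrac{\delta\ell}{\delta\nu})\diamond n$. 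Finally the auxiliary equation \eqref{eq:aux-thm}, $\dot n=\nu+u n$, is inserted into $\tfrac{\delta\ell}{\delta\nu}\diamond\dot n$, whereupon the two $\tfrac{\delta\ell}{\delta\nu}\diamond\nu$ terms cancel as well.

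After these three cancellations the claim \eqref{coadjoint-mot1} reduces to the single purely kinematic identity
\begin{equation*}
\ad^*_u\Big(\tfrac{\delta\ell}{\delta\nu}\diamond n\Big) = \Big(u\star\tfrac{\delta\ell}{\delta\nu}\Big)\diamond n - \tfrac{\delta\ell}{\delta\nu}\diamond(u n),
\end{equation*}
which expresses the infinitesimal equivariance of the momentum map $(\sigma,n)\mapsto\sigma\diamond n$ for the $G$-action on $N$. I would establish it by pairing both sides with an arbitrary $v\in\mathfrak{g}$ and rewriting every term through the definition of the diamond \eqref{eq:diamond-def}, the definition of $\star$ as the $\mathfrak{g}$-action on $N^*$ dual to the action on $N$, and the bracket relation \eqref{eq:Lie-brkt}; the three resulting pairings then recombine via the Jacobi identity.

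The main obstacle is precisely this last identity. The term $\tfrac{\delta\ell}{\delta\nu}\diamond(u n)$ pairs the diamond against the tangent-lifted action on $TN$ rather than the action on $N$ itself, so the delicate point is to keep the sign conventions of \eqref{eq:Lie-brkt} and \eqref{eq:diamond-def} consistent between the $\nu$-slot and the $n$-slot when the dual action $\star$ is moved across the pairing. Conceptually the result is safe, since \eqref{coadjoint-mot1} is just the statement that $M$ is the Noether momentum map associated with the right $G$-symmetry $(g,\eta)h=(gh,h^{-1}\eta)$ and therefore evolves by coadjoint motion; the only genuinely careful bookkeeping is the verification of the equivariance identity above.
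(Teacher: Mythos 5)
Your proposal is correct and takes essentially the same route as the paper's proof: differentiate the total momentum, cancel terms using the two Euler--Poincar\'e equations and the auxiliary equation, and reduce everything to the equivariance identity
$\big(u\star\tfrac{\delta\ell}{\delta\nu}\big)\diamond n - \tfrac{\delta\ell}{\delta\nu}\diamond(un) = \ad^*_{u}\big(\tfrac{\delta\ell}{\delta\nu}\diamond n\big)$,
which the paper verifies exactly as you propose, by pairing with an arbitrary $\al\in\mathfrak g$ and unwinding the definitions of $\diamond$, $\star$, and the bracket \eqref{eq:Lie-brkt} (your invocation of ``the Jacobi identity'' is really just this commutator relation for the Lie algebra action, which is what the paper uses).
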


The equivalence of statements {\bf i} and {\bf ii} in Theorem \ref{equiv-thm-det} is classical, and no other proof will be offered here. 
The proofs of the other equivalences in Euler-Poincar\'e Theorem \ref{equiv-thm-det} and its Corollary \ref{EPcorollary-coadjoint} for deterministic metamorphosis are laid out in the sections below. 

\subsection{Deterministic Euler-Lagrange equations}

We compute the Euler-Lagrange equations associated with the minimization of the symmetry reduced action
$$
\mathcal{S} = 
\int_0^1 \ell(u_t, n_t, \nu_t) dt
$$
with fixed boundary conditions $n_0$ and $n_1$. We therefore consider
variations $\de u$ and $\om = \de n$. The variation $\de \nu$ can be
obtained from $n = g\eta$ and $\nu = g\dot \eta$ yielding $\dot n =
\nu + un$ and $\dot\om = \delta \nu + u\om + (\delta u) n$. Here and in
the following of this paper, we assume that computations are performed
in a local chart on $TN$ with respect to which we take partial derivatives.

We
therefore have
$$
\delta \mathcal{S} = \delta
\int_0^1 \left(\Lform{\frac{\delta \ell}{\delta u}}{ \delta u_t} + \Lform{
\frac{\delta \ell}{\delta n}}{\omega_t} + \Lform{\frac{\delta
\ell}{\delta \nu}}{\dot\omega_t - u_t\omega_t -(\delta u_t)n_t}\right) dt =0.
$$
The $\de u$ term yields the equation 
$$\frac{\delta \ell}{\delta u} + \frac{\delta \ell}{\delta \nu} \diamond n_t =
0.$$
where, in a slight abuse of notation,  $\de \ell/\de\nu \in T(TN)^*$ has been 
considered as a linear form on  $TN$ by $\lform{\de \ell/\de\nu}{z} :=
\lform{\de\ell/\de\nu}{(0, z)}$.
From the terms involving $\om$, we find, after an integration by parts
\begin{equation}
\label{eq:nu-var}
\frac{\partial}{\partial t} \frac{\delta
\ell}{\delta \nu} + u_t \star \frac{\delta
\ell}{\delta \nu} - \frac{\delta
\ell}{\delta n} = 0\,.
\end{equation}
Here, we have introduced notation for the star $(\star)$ operation,
\begin{equation}
\label{eq:star-def}
\Lform{\frac{\delta
\ell}{\delta \nu}}{u\om} =  \Lform{u \star \frac{\delta
\ell}{\delta \nu}}{\om}.
\end{equation}
For $u\om=\mathcal{L}_u\om$,
the star $(\star)$ operation denotes the dual of the Lie derivative, $u \star \nu^*=\mathcal{L}_u^T\nu^*$.

We therefore obtain the system of equations
\begin{align}
\label{eq:meta.1}
\begin{split}
\frac{\delta \ell}{\delta u} + \frac{\delta \ell}{\delta \nu} \diamond
n_t &= 0\,,\\ \\
\frac{\partial}{\partial t} \frac{\delta
\ell}{\delta \nu} + u_t \star \frac{\delta
\ell}{\delta \nu} &= \frac{\delta \ell}{\delta n}
\,,\\
\\
\dot n_t = \nu_t + u_t n_t\,.&
\end{split}
\end{align}
Note that the sum $\frac{\delta \ell}{\delta u} + \frac{\delta \ell}{\delta \nu} \diamond
n$ is the momentum map arising from Noether's theorem for the considered
invariance of the Lagrangian. The special form of the boundary
conditions (fixed $n_0$ and $n_1$) ensures that this momentum map vanishes.

\subsection{Deterministic Euler-Poincar\'e reduction}

As explained in in \cite{HoTrYo2009}, a system equivalent to that in \eqref{eq:meta.1} can be obtained via Euler-Poincar\'e
reduction \cite{HoMaRa1998}. In this setting, we make the
variation in the group element and in the template instead of the
velocity and the image. We denote $\xi_t  = \de g_t g_t^{-1}$ and
$\varpi_t = g_t\de\eta_t$. From these definitions, we obtain the expressions of the variations, $\de u$,
$\de n$ and $\de \nu$. 

We first have $\de u_t = \dot\xi_t + [\xi_t, u_t]$, which arises 
from the standard Euler-Poincar\'e reduction theorem, as
provided in \cite{HoMaRa1998,MR1994}. We also have $\de n_t = \de(g_t\eta_t)
= \varpi_t + \xi_t n_t$. 
From $\nu_t = g_t\dot\eta_t$, we get
$\de \nu_t = g_t\de\dot \eta_t + \xi_t\nu_t$ and from $\varpi_t = g_t\de\eta_t$ we also
have $\dot \varpi_t = u_t\varpi_t + g_t\dot\eta_t$. This yields
$\de\nu_t = \dot\varpi_t + \xi_t\nu_t - u_t\varpi_t.$

We also compute the boundary conditions for $\xi$ and $\varpi$. At
$t=0$, we have $g_0 = \id$ and $n_0 = g_0\eta_0 = \text{cst}$ which
implies $\xi_0=0$ and $\varpi_0=0$. At $t=1$, the relation $g_1\eta_1
= \text{cst}$ yields
$\xi_1 n_1 + \om_1 = 0$. 

Now, the first variation of  is
$$
\int_0^1 \left(\Lform{\frac{\delta \ell}{\delta u}}{\dot\xi_t - \ad_{u_t}\xi_t} 
+ \Lform{\frac{\delta \ell}{\delta n_t}}{\varpi_t + \xi_t n_t} 
+ \Lform{\frac{\delta\ell}{\delta \nu}}
{\dot\varpi_t + \xi_t\nu_t - u_t\varpi_t}\right) dt =0.
$$
In the integration by parts to eliminate $\dot\xi_t$ and $\dot\varpi_t$,
the boundary term is $\lform{(\de \ell/\de u)_1}{\xi_1} + \lform{(\de\ell
/ \de\nu)_1}{\om_1}$. Using the boundary condition, the last term can
be re-written
$$
- \lform{(\de\ell
/ \de\nu)_1}{\xi_1n_1} = \lform{(\de\ell
/ \de\nu)_1\diamond n_1}{\xi_1}.
$$

We therefore obtain the endpoint equation
$$
\frac{\delta \ell}{\delta u}(1) + \frac{\delta \ell}{\delta \nu}(1) \diamond
n_1 = 0.
$$
The evolution equation for ${\delta \ell}/{\delta u}$ is
$$
\frac{\partial}{\partial t} \frac{\delta \ell}{\delta u} + \ad^*_{u_t}
\frac{\delta \ell}{\delta u} + \frac{\delta \ell}{\delta n} \diamond n_t
+ \frac{\delta \ell}{\delta \nu} \diamond \nu_t = 0
,$$
and ${\delta \ell}/{\delta \nu}$ evolves by
$$
\frac{\partial}{\partial t} \frac{\delta \ell}{\delta \nu} +
u_t \star \frac{\delta \ell}{\delta \nu} - \frac{\delta \ell}{\delta
n} = 0.
$$
Consequently, we obtain the following system of equations,
\begin{align}
\label{eq:meta.2-redux}
\begin{split}
&\frac{\partial}{\partial t} \frac{\delta \ell}{\delta u} + \ad^*_{u_t}
\frac{\delta \ell}{\delta u} + \frac{\delta \ell}{\delta n} \diamond n_t
+ \frac{\delta \ell}{\delta \nu} \diamond \nu_t = 0,
\\ \\&
\frac{\partial}{\partial t} \frac{\delta \ell}{\delta \nu} + u_t \star \frac{\delta \ell}{\delta \nu} - \frac{\delta \ell}{\delta
n}= 0,
\end{split}
\end{align}
as well as the auxiliary equation 
\begin{equation} \label{eq:aux}
\frac{\partial}{\partial t} n_t = \nu_t + u_tn_t
,\end{equation}
obtained from the definitions 
$u_t = \dot g_tg_t^{-1}$ and $n_t = g_t\eta_t$. Moreover, the endpoint condition holds, that
\begin{equation} \label{eq:endptcond}
\frac{\delta \ell}{\delta u}(1) + \frac{\delta \ell}{\delta \nu}(1) \diamond n_1 = 0
, 
\end{equation}
at time $t=1$. \hfill $\Box$
\bigskip

As discussed in in \cite{HoTrYo2009}, the system \eqref{eq:meta.2-thm} is equivalent to
\eqref{eq:meta.1}, since they characterize the same critical
points. Direct evidence of this fact may be obtained from the proof of Corollary \ref{EPcorollary-coadjoint}, that
\begin{align}
\frac{\partial}{\partial t} \Big(\frac{\delta \ell}{\delta u} +
\frac{\delta \ell}{\delta \nu}\diamond n\Big) 
+ \ad_{u_t}^*\Big(\frac{\delta \ell}{\delta u} +
\frac{\delta \ell}{\delta \nu}\diamond n\Big) = 0.
\label{coadjoint-mot2}
\end{align}
\begin{proof}[Proof of Corollary \ref{EPcorollary-coadjoint}]
A solution of \eqref{eq:meta.2-thm} satisfies the coadjoint motion equation,
\begin{eqnarray*}
\frac{\partial}{\partial t}\Big(\frac{\delta \ell}{\delta u_t} + \frac{\delta \ell}{\delta \nu} \diamond
n_t\Big) &=& \frac{\partial}{\partial t} \frac{\delta \ell}{\delta u} +
\Big( \frac{\partial}{\partial t} \frac{\delta \ell}{\delta \nu}\Big)
\diamond n_t + \frac{\delta \ell}{\delta \nu} \diamond \dot n_t\\
&=&
\frac{\partial}{\partial t} \frac{\delta \ell}{\delta u} +
\Big( \frac{\delta \ell}{\delta n} - u_t \star \frac{\delta \ell}{\delta \nu} \Big)
\diamond n_t + \frac{\delta \ell}{\delta \nu} \diamond (\nu_t + u_tn_t)
\\
&=&
\frac{\partial}{\partial t} \frac{\delta \ell}{\delta u} +
\frac{\delta \ell}{\delta n} \diamond n_t + \frac{\delta \ell}{\delta
\nu} \diamond \nu_t -
\Big( u_t\star  \frac{\delta \ell}{\delta \nu} \Big)
\diamond n_t + \frac{\delta \ell}{\delta \nu} \diamond (u_tn_t)
\\
&=&
-\ad^*_{u_t} \frac{\delta \ell}{\delta u} - \ad^*_{u_t}(\frac{\delta \ell}{\delta
\nu} \diamond n_t).
\end{eqnarray*}
In the last equation, we have used the fact that, for any $\al\in\mathfrak g$, 
\begin{eqnarray*}
\Lform{\frac{\delta \ell}{\delta \nu} \diamond (un) - \Big( u\star \frac{\delta \ell}{\delta \nu} \Big)
\diamond n}{\al} &=& \Lform{\frac{\delta \ell}{\delta \nu}}{\al (un) -
u(\al n)} \\
&=& - \Lform{\frac{\delta \ell}{\delta \nu}}{[u, \al] n} \\
&=& - \Lform{\frac{\delta \ell}{\delta \nu} \diamond n}{[u, \al]} \\
&=& - \Lform{\ad^*_{u_t}(\frac{\delta \ell}{\delta
\nu} \diamond n_t)}{\al}.
\end{eqnarray*}
\end{proof}

\begin{remark}
Corollary \ref{EPcorollary-coadjoint} combined with the relation $(\delta \ell/\delta u)_1 +
(\delta \ell/\delta \nu)_1\diamond u_1 =0$ implies the first
equation in \eqref{eq:meta.1}. Namely, the zero level set of the momentum map is preserved by coadjoint motion. 
\end{remark}

\subsection{Deterministic Hamiltonian formulation}

The Euler-Poincar\'e formulation of metamorphosis in \eqref{eq:meta.2-thm} and \eqref{eq:aux-thm} in Theorem \ref{equiv-thm-det} allows passage to its Hamiltonian formulation via the following {\bfi Legendre transformation} of the reduced Lagrangian $\ell$ in the velocities ${u}$ and $\nu$, in the Eulerian fluid description,
\begin{equation}\label{liqxtal-legendre-xform}
\mu = \frac{\delta \ell}{\delta u}\,, \
\sigma = \frac{\delta \ell}{\delta \nu}\,, \quad
h(\mu, \sigma, n)
 = \lform{\mu}{u} + \lform{\sigma}{\nu}
- \ell(u, \nu, n).
\end{equation}
Accordingly, one computes the variational derivatives of $h$ as
\begin{equation}\label{liqxtal-dual-var-derivs}
\frac{\delta h}{\delta \mu} 
=
 u \,,
\quad
\frac{\delta h}{\delta \sigma} 
=
 \nu\,,
\quad
\frac{\delta h}{\delta n} 
=
- \,\frac{\delta \ell}{\delta n}\,.
\end{equation}
Consequently, the Euler-Poincar\'e equations \eqref{eq:meta.2-thm} and the auxiliary kinematic equation \eqref{eq:aux-thm} for metamorphosis imply the following equations, for the Legendre-transformed variables, $(\mu,\sigma, n)$,
\begin{align}
\label{eq:meta.2-redux}
\begin{split}
&{\partial_t}\mu 
+ \ad^*_{\delta h/\delta \mu} \mu 
+ \sigma \diamond \frac{\delta h}{\delta \sigma} 
- \frac{\delta h}{\delta n} \diamond n 
= 0,
\\ \\&
{\partial_t}\sigma + \mathcal{L}^T_{\delta h/\delta \mu}\sigma 
+ \frac{\delta h}{\delta n}= 0,
\end{split}
\end{align}
as well as the auxiliary equation 
\begin{equation} \label{eq:aux-met}
{\partial_t} n 
= \mathcal{L}_{\delta h/\delta \mu}n + \frac{\delta h}{\delta \sigma} 
.
\end{equation}
These equations are {\bfi Hamiltonian}. That is, they may be expressed in the
form 
\begin{equation}\label{Ham-form-met}
\frac{\partial \mathbf{z} }{\partial t}
=
\{\mathbf{z},h\} = \hbox{\sffamily b}\,\cdot
\frac{\delta h }{\delta \mathbf{z}}\,,
\end{equation}
where $\mathbf{z}\in (\mu, \sigma, n)$ and the Hamiltonian matrix
{\sffamily b} defines the Poisson bracket
\begin{equation} \label{PB-def-met}
\{f,h\}= \int d^{\,n\,}x \,
\frac{\delta f }{\delta \mathbf{z}}
\cdot\,\hbox{\sffamily b}\,\cdot
\frac{\delta h }{\delta \mathbf{z}}
\,,
\end{equation}
which is bilinear, skew symmetric and satisfies the {\bfi Jacobi identity},
\[
\{f,\{g,h\}\}+ \{g,\{h,f\}\}+ \{h,\{f,g\}\}= 0.
\] 
Assembling the metamorphosis equations \eqref{eq:meta.2-redux} -
\eqref{eq:aux-met} into the Hamiltonian form \eqref{Ham-form-met}
gives, 
\begin{equation}\label{Ham-matrix-met}
\begin{bmatrix}
\partial_t \mu
\\
\partial_t \sigma
\\
\partial_t n
\end{bmatrix}
=
-
\begin{bmatrix}
 \ad^*_{\Box} \mu & -\, \sigma  \diamond  \Box &  \Box \diamond n
\\
\mathcal{L}^T_{\Box} \sigma   &   0    &  -1
\\
- \,\mathcal{L}_{\Box}n           & 1      &   0
\end{bmatrix}
\begin{bmatrix}
{\delta h/\delta \mu}\,(=u)
\\
{\delta h/\delta \sigma}\,(=\nu)
\\
{\delta h/\delta n}
\end{bmatrix}
\end{equation}
In this expression, the operators act to the right on all terms in a product by the
chain rule.

{\bf Remarks about the Hamiltonian matrix.}
The Hamiltonian matrix in equation \eqref{Ham-matrix-met} was
discovered some time ago in the context of complex fluids in \cite{HK1988}. There, it was proven to be a valid Hamiltonian matrix by associating its Poisson bracket as defined in equation (\ref{PB-def-met}) with the dual space of a certain Lie algebra of
semidirect-product type which has a canonical two-cocycle on it. The
mathematical discussion of Lie algebras with two-cocycles is
given in \cite{HK1988}. See also \cite{Holm2002,GBR2009,GBT2010,GBHR2013} for further discussions of semidirect-product Poisson brackets with cocycles for complex fluids.

Being dual to the semidirect-product Lie algebra $\mathfrak{g}\,\circledS\, T^*N$, our Hamiltonian matrix in equation
(\ref{Ham-matrix-met}) is in fact a {\bfi Lie-Poisson
Hamiltonian matrix}. See, e.g., \cite{MR1994} and references
therein for more discussions of such Hamiltonian matrices. For our present
purposes, its rediscovery in the context of metamorphosis links the
physical and mathematical interpretations of the variables in the theory of
imaging science with earlier work in complex fluid dynamics and with the gauge theory approach to condensed matter, see, e.g., \cite{K1989}.

\subsection{Deterministic Hamilton-Pontryagin approach}

An alternative formulation to either the Euler-Lagrange equations, or the Euler-Poincar\'e approach is obtained in the Hamilton--Pontryagin principle. In this approach, the diffeomorphic paths appear explicitly. 
\begin{theorem}[Hamilton--Pontryagin principle]
\label{HPtheorem}
The Euler--Poincar\'e equations in Corollary \ref{EPcorollary-coadjoint} for coadjoint motion given by 
\begin{align}\label{EP-eqn-HP}
\begin{split}
&\frac{\partial}{\partial t} \Big(\frac{\delta \ell}{\delta u} +
\frac{\delta \ell}{\delta \nu}\diamond n\Big) 
+ \ad_{u}^*\Big(\frac{\delta \ell}{\delta u} +
\frac{\delta \ell}{\delta \nu}\diamond n\Big) = 0,
\\ \\&
\frac{\partial}{\partial t} \frac{\delta \ell}{\delta \nu} + u \star \frac{\delta \ell}{\delta \nu} - \frac{\delta \ell}{\delta n}= 0,
\end{split}
\end{align}
as well as the auxiliary equation 
\begin{equation} \label{eq:aux}
\dot n = \nu + un
,\end{equation}
on the space $\mathfrak{g}^*\times T^*N\times TN$ are equivalent to the following implicit variational principle,
\begin{equation}
\delta S(u,n,\dot{n},\nu,g,\dot{g})
= 0,
\label{HPimplicitvarprinc}
\end{equation}
for a constrained  action 
\begin{align}
S(u,n,\dot{n},\nu,g,\dot{g})
=
\int^1_0 \Big[\ell (u,n,\nu) 
+ 
\langle\, M\,,\,(\dot{g}g^{-1} - u)\,\rangle
+ 
\langle\, \sigma\,,\,(\dot n - \nu - un
)\,\rangle
\Big]\, dt
\,.
\label{implicit-actionHP}
\end{align}
\end{theorem}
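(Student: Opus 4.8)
The plan is to treat the six arguments of the constrained action in \eqref{implicit-actionHP} on two footings: $g$ and $n$ are varied as curves, so that $\dot g$ and $\dot n$ are their induced time derivatives, while $u,\nu,M,\sigma$ are varied freely and independently. The Lagrange multipliers $M\in\mathfrak g^*$ and $\sigma\in T^*N$ serve to impose the two kinematic constraints $\dot g g^{-1}=u$ and $\dot n=\nu+un$ that tie the configuration curves to the independent velocities. The proof then reduces to computing the stationarity conditions $\delta S=0$ with respect to each variable and verifying that the resulting system is exactly \eqref{EP-eqn-HP} together with the auxiliary equation $\dot n=\nu+un$.

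First I would dispose of the algebraic variations. Varying $S$ in $M$ and in $\sigma$ immediately returns the two constraints $\dot g g^{-1}-u=0$ and $\dot n-\nu-un=0$, the latter being the auxiliary equation. Varying in $\nu$, which enters only through $\ell$ and through the term $-\langle\sigma,\nu\rangle$, yields the Legendre-type relation $\sigma=\delta\ell/\delta\nu$. Varying in $u$, which enters through $\ell$, through $-\langle M,u\rangle$, and through $-\langle\sigma,un\rangle$, and rewriting $-\langle\sigma,(\delta u)n\rangle=\langle\sigma\diamond n,\delta u\rangle$ via the diamond definition \eqref{eq:diamond-def}, gives $M=\delta\ell/\delta u+\sigma\diamond n$. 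Thus both multipliers are pinned to the momentum data, recovering precisely the constraint relations of \eqref{eq:meta.1}, with $M$ identified as the full momentum map $\delta\ell/\delta u+(\delta\ell/\delta\nu)\diamond n$.

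The dynamical content then comes from the variations in $g$ and in $n$. Setting $\xi=\delta g\,g^{-1}$ and using the standard reduction formula $\delta(\dot g g^{-1})=\dot\xi-\ad_{\dot g g^{-1}}\xi$, which becomes $\dot\xi-\ad_u\xi$ upon invoking the reconstruction constraint, the variation of $\langle M,\dot g g^{-1}\rangle$ followed by an integration by parts in $t$ produces $\int_0^1\langle-\dot M-\ad^*_u M,\xi\rangle\,dt$ plus the boundary term $\langle M,\xi\rangle\big|_0^1$. Interior stationarity gives the coadjoint motion equation $\partial_t M+\ad^*_u M=0$, which on substituting $M=\delta\ell/\delta u+(\delta\ell/\delta\nu)\diamond n$ is exactly the first equation of \eqref{EP-eqn-HP}. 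Varying in $n$: the term $\langle\sigma,\dot n\rangle$ gives $-\langle\dot\sigma,\delta n\rangle$ after integration by parts, the term $-\langle\sigma,un\rangle$ gives $-\langle u\star\sigma,\delta n\rangle$ by the star definition \eqref{eq:star-def}, and $\ell$ contributes $\langle\delta\ell/\delta n,\delta n\rangle$; collecting these yields $\partial_t\sigma+u\star\sigma-\delta\ell/\delta n=0$, which with $\sigma=\delta\ell/\delta\nu$ is the second equation of \eqref{EP-eqn-HP}.

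I expect the main obstacle to be the careful sign and duality bookkeeping in the $g$-variation, namely getting the reduction formula $\delta(\dot g g^{-1})=\dot\xi-\ad_u\xi$ right, transposing correctly through $\ad^*$, and converting the $un$-terms by the diamond and star operations so that the momentum map emerges cleanly as the conserved coadjoint quantity. The boundary term $\langle M,\xi\rangle\big|_0^1$ also deserves attention: since $g_0=\id$ forces $\xi_0=0$, whereas the free variation at $t=1$ reproduces the endpoint condition $M(1)=0$, in agreement with Theorem \ref{equiv-thm-det}. With these computations in hand, both directions of the asserted equivalence follow at once, because each stationarity condition is an \emph{if and only if} statement in the corresponding free variation, so that a curve satisfies \eqref{EP-eqn-HP} and the auxiliary equation precisely when it is a critical point of \eqref{implicit-actionHP}.
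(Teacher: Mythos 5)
Your proposal is correct and follows essentially the same route as the paper's own proof: free variations in $u,\nu,M,\sigma$ pin the multipliers to the momenta via the diamond and star definitions, while the $g$- and $n$-variations (using $\delta(\dot g g^{-1})=\dot\xi-\ad_u\xi$ and integration by parts) yield the coadjoint equation $\partial_t M+\ad^*_u M=0$ and $\partial_t\sigma+u\star\sigma-\delta\ell/\delta n=0$, exactly as in \eqref{symmHPeqns}--\eqref{constraintHPeqns}. The only cosmetic difference is at the boundary: the paper simply takes $\xi=\delta g\,g^{-1}$ to vanish at both endpoints, whereas you keep $t=1$ free and recover the endpoint condition $M(1)=0$ of Theorem \ref{equiv-thm-det}, which is a harmless (and consistent) refinement.
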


\begin{proof}
The variations of $S$ in formula (\ref{implicit-actionHP}) are given by
\begin{align}
\begin{split}
0 = \delta S 
=&
\int^1_0
\Big\langle\,\frac{\delta \ell }{\delta u} 
- M + \sigma\diamond n\,,\, \delta u\,\Big\rangle
-
\Big\langle\, \dot{\sigma} + u\star \sigma - \frac{\delta\ell}{\delta n}  \,,\, \delta n
\,\Big\rangle
\\&\qquad
+
\Big\langle\,\frac{\delta \ell }{\delta \nu} 
- \sigma\,,\, \delta \nu\,\Big\rangle
+
\Big\langle\, M  \,,\, \delta (g^{-1}\dot{g})\,\Big\rangle
\, dt
\,.
\end{split}
\label{implicit-actionHP}
\end{align}
After a side calculation, one finds $\delta(\dot{g}g^{-1})=\dot{\xi} - {\rm ad}_u \xi$, with $\xi=\delta gg^{-1}$ for the last term in \eqref{implicit-actionHP}. Then, integrating by parts yields the familiar relation
\begin{eqnarray*}
\int^1_0
\Big\langle\, M  \,,\, \delta (\dot{g}g^{-1})
\,\Big\rangle
\, dt
&=&
\int_0^1
\Big\langle\, M  \,,\, (\dot{\xi} - {\rm ad}_u \xi) 
\,\Big\rangle
\, dt
\\&=&
\int_0^1
\Big\langle\, -\dot{M} -  {\rm ad}^*_u\,M \,,\, \eta
\,\Big\rangle
\, dt
+
\Big\langle\, M \,,\, \xi \,\Big\rangle\Big|_0^1
\,,
\end{eqnarray*}
where $\xi=\delta gg^{-1}$ vanishes at the endpoints in time. 
\smallskip

Thus, stationarity $\delta S=0$ of the Hamilton--Pontryagin variational
principle with constrained action integral \eqref{implicit-actionHP} yields
the following  set of equations: 
\begin{equation}
M 
= \frac{\delta \ell }{\delta u} + \sigma\diamond n
\,,\quad
\sigma = \frac{\delta \ell }{\delta \nu} 
\,,\quad
\frac{\partial\sigma}{\partial t} + u\star \sigma - \frac{\delta\ell}{\delta n}  = 0
\,,\quad
\frac{\partial M}{\partial t} + {\rm ad}^*_u\,M = 0
\,,
\label{symmHPeqns}
\end{equation}
as well as the constraint equations
\begin{equation}
\dot{g}g^{-1} - u = 0
\quad\hbox{and}\quad
\dot n - \nu - un = 0
\,.
\label{constraintHPeqns}
\end{equation}
This finishes the proof of the Hamilton--Pontryagin principle in Theorem \ref{HPtheorem}.
\end{proof}

\begin{proposition}[Untangling the Lie-Poisson structure \eqref{Ham-matrix-met}]$\,$\\
By the change of variables 
\begin{equation}\label{Ham-change of variables}
h(\mu,\sigma,n) = H(M,\sigma,n)\,,
\end{equation}
the Lie-Poisson structure \eqref{Ham-matrix-met} transforms into
\begin{equation}\label{Ham-matrix-met-untangled}
\begin{bmatrix}
\partial_t M
\\
\partial_t \sigma
\\
\partial_t n
\end{bmatrix}
=
-
\begin{bmatrix}
 {\rm ad}^*_{\Box} M & 0 & 0
\\
0   &   0    &  1
\\
0    & -1    &   0
\end{bmatrix}
\begin{bmatrix}
{\delta H/\delta M}\,(=u)
\\
{\delta H/\delta \sigma}+ \mathcal{L}_{\delta H/\delta M}n\,(=\nu+ \mathcal{L}_un)
\\
-{\delta H/\delta n} - \mathcal{L}^T_{\delta H/\delta M}\sigma\,
(=-{\delta H/\delta n} - \mathcal{L}^T_u\sigma)
\end{bmatrix},
\end{equation}
and thereby recovers equations \eqref{EP-eqn-HP} and \eqref{eq:aux} in Hamiltonian form. 
\end{proposition}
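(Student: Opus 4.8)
The plan is to treat \eqref{Ham-matrix-met} and \eqref{Ham-matrix-met-untangled} as two coordinate presentations of the \emph{same} Lie--Poisson bracket, related by the invertible change of momentum variable dictated by the Hamilton--Pontryagin relation \eqref{symmHPeqns}, namely $M = \mu + \sigma\diamond n$ (equivalently $\mu = M - \sigma\diamond n$), with $\sigma$ and $n$ left unchanged. Since a Poisson operator transforms as a bivector under a change of dependent variables, I would use the standard rule $\mathcal{B}_{\mathrm{new}} = \mathsf{J}\,\mathcal{B}_{\mathrm{old}}\,\mathsf{J}^{\dagger}$, where $\mathsf{J}$ is the Fr\'echet derivative of the map $(\mu,\sigma,n)\mapsto(M,\sigma,n)$ and $\mathsf{J}^{\dagger}$ its dual with respect to the pairings. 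Because this map is a diffeomorphism, the Jacobi identity is automatically inherited, so no independent verification of the Poisson property is needed: the entire content is the algebraic identification of $\mathsf{J}\,\mathcal{B}_{\mathrm{old}}\,\mathsf{J}^{\dagger}$ with the block form in \eqref{Ham-matrix-met-untangled}, together with reading off that the resulting equations are \eqref{EP-eqn-HP}--\eqref{eq:aux}.

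The key computational input is the chain rule for the functional derivatives. From $\delta M = \delta\mu + \delta\sigma\diamond n + \sigma\diamond\delta n$ one reads off that $\mathsf{J}$ is block-triangular with identity diagonal and off-diagonal entries built from the two partial diamond maps $\sigma\mapsto\sigma\diamond n$ and $n\mapsto\sigma\diamond n$. Dualizing these via the defining relation \eqref{eq:diamond-def}, $\lform{\tilde n^{*}\diamond\tilde n}{u}=-\lform{\tilde n^{*}}{u\tilde n}$, converts them into Lie-derivative operators: the dual of $(\,\cdot\,\diamond n)$ sends $u\mapsto -\mathcal{L}_u n$ and the dual of $(\sigma\diamond\,\cdot\,)$ sends $u\mapsto -\mathcal{L}^{T}_u\sigma = -\,u\star\sigma$. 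Consequently $\delta h/\delta\mu = \delta H/\delta M\,(=u)$, while $\delta h/\delta\sigma$ and $\delta h/\delta n$ differ from $\delta H/\delta\sigma$ and $\delta H/\delta n$ precisely by the $\mathcal{L}_u n$ and $\mathcal{L}^{T}_u\sigma$ terms displayed in the column vector of \eqref{Ham-matrix-met-untangled}; this is what produces the annotated entries $\nu + \mathcal{L}_u n$ and $-\,\delta H/\delta n - \mathcal{L}^{T}_u\sigma$.

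With these relations in hand I would carry out the matrix product. The top-left entry is the crux: transforming $\ad^{*}_u\mu$ back to the new variables produces $\ad^{*}_u(M-\sigma\diamond n)$, and the off-diagonal $\diamond$- and $\star$-blocks of $\mathsf{J}\mathcal{B}\mathsf{J}^{\dagger}$ generate exactly the compensating terms $\sigma\diamond(un) - (u\star\sigma)\diamond n$. These combine into $-\ad^{*}_u(\sigma\diamond n)$ by the very identity established in the proof of Corollary \ref{EPcorollary-coadjoint}, namely $\frac{\delta\ell}{\delta\nu}\diamond(un) - \big(u\star\frac{\delta\ell}{\delta\nu}\big)\diamond n = -\ad^{*}_u\big(\frac{\delta\ell}{\delta\nu}\diamond n\big)$. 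This cancellation leaves only $\ad^{*}_u M$ in the $(M,M)$ slot and kills the $(M,\sigma)$ and $(M,n)$ couplings, while the remaining $2\times 2$ block in $(\sigma,n)$ collapses to the canonical symplectic form $\left[\begin{smallmatrix}0&1\\-1&0\end{smallmatrix}\right]$. Reading off the three rows then returns the coadjoint equation $\partial_t M + \ad^{*}_u M = 0$, the $\sigma$-equation, and the auxiliary equation $\dot n = \nu + \mathcal{L}_u n$, i.e.\ \eqref{EP-eqn-HP}--\eqref{eq:aux}.

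The main obstacle I anticipate is sign and convention bookkeeping rather than any conceptual difficulty. In particular, the bracket in \eqref{eq:Lie-brkt} carries the nonstandard sign $\mathcal{L}_{[u,v]} = -(\mathcal{L}_u\mathcal{L}_v - \mathcal{L}_v\mathcal{L}_u)$, which propagates into the signs of $\ad^{*}$, of the diamond, and of the dualized Lie derivatives, so one must track these consistently in order that the compensating terms cancel with the correct sign and the residual $2\times 2$ block is genuinely skew. The cleanest safeguard is to verify the outcome twice --- once by the tensorial transformation $\mathsf{J}\mathcal{B}\mathsf{J}^{\dagger}$ and once by substituting the chain-rule relations directly into the three scalar equations of motion and matching them termwise against \eqref{EP-eqn-HP} and \eqref{eq:aux} --- and to use the skew-symmetry of the transformed operator (guaranteed a priori by the bivector transformation law) as an independent consistency check on the signs.
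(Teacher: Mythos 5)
Your proposal is correct and follows essentially the same route as the paper: the paper's proof consists precisely of expanding $\delta h(\mu,\sigma,n)=\delta H(M,\sigma,n)$ under the momentum shift $M=\mu+\sigma\diamond n$ via the chain rule, dualizing the partial diamond maps into the Lie-derivative terms $\mathcal{L}_{\delta H/\delta M}n$ and $\mathcal{L}^T_{\delta H/\delta M}\sigma$, and substituting the resulting variational-derivative relations into the structure \eqref{Ham-matrix-met}. Your $\mathsf{J}\,\mathcal{B}\,\mathsf{J}^{\dagger}$ bivector framing and your explicit verification of the top-left block via the identity $\sigma\diamond(\mathcal{L}_u n)-(\mathcal{L}^T_u\sigma)\diamond n=-\,{\rm ad}^*_u(\sigma\diamond n)$ from the proof of Corollary \ref{EPcorollary-coadjoint} merely make explicit the cancellation bookkeeping that the paper's terse proof leaves implicit.
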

\begin{proof}
The proof follows from the expanding out the change of variables formula for variational derivatives,
\[
\delta h(\mu,\sigma,n) = \delta H(M,\sigma,n)\,,
\]
where $(M,\sigma,n) := (\mu + \sigma\diamond n,\sigma,n)$. Namely, one substitutes the corresponding terms,
\begin{align*}
\delta h(\mu,\sigma,n) &= \Lform{\frac{\delta h}{\delta \mu}}{\delta \mu}
+ \Lform{\frac{\delta h}{\delta \sigma}}{\delta \sigma} 
+ \Lform{\frac{\delta h}{\delta n}}{\delta n}
,\\
\delta H(M,\sigma,n) &= \Lform{\frac{\delta H}{\delta M}}{\delta \mu}
+ \Lform{\frac{\delta H}{\delta \sigma}+ \mathcal{L}_{\frac{\delta H}{\delta M}}n}{\delta \sigma}
- \Lform{\frac{\delta H}{\delta n} + \mathcal{L}^T_{\frac{\delta H}{\delta M}}\sigma}{\delta n},
\end{align*}
into the transformed Hamiltonian structure.
\end{proof}
\begin{remark}
The Lie-Poisson Hamiltonian structure \eqref{Ham-matrix-met-untangled} is the variable transformation \eqref{Ham-change of variables} of the corresponding structure \eqref{Ham-matrix-met}. The corresponding Lie-Poisson bracket is defined on the dual Lie algebra of the vector fields over the domain, $\mathcal{D}$; namely, $\mathfrak{g}=\mathfrak{X}(\mathcal{D})$ with canonical 2-cocycle $\mathfrak{X}(\mathcal{D})^*\times T^*\mathcal{F}(\mathcal{D},N)$, where $\mathcal{F}(\mathcal{D},N)$ denotes smooth functions from the domain, $\mathcal{D}$, to the data structure manifold, $N$.
For more details about how the untangling of Lie-Poisson structures is applied in geometric mechanics in the theory of complex fluids and for further citations in this literature to earlier work, see \cite{GBT2010,GBHR2013}.
\end{remark}

\section{Metamorphosis by stochastic Lie transport} \label{stoch-met-review}

\subsection{Notation and approach for the stochastic case}

To derive the stochastic partial differential equations (SPDEs) for uncertainty quantification in the metamorphosis approach to imaging science, we combine the recent developments for uncertainty quantification in fluid dynamics in \cite{holm2015variational} with the Hamilton-Pontryagin principle for metamorphosis discussed in the previous section. The idea is to replace the deterministic evolutionary constraints in equation \eqref{constraintHPeqns} by the following stochastic processes,
\begin{equation}
\dd{g}g^{-1} - \left(u_t(x)\,dt + \sum_i \xi_i(x)\circ dW^i_t \right) = 0
\quad\hbox{and}\quad
\dd n_t - \nu_t\,dt - \left(u_t\,dt + \sum_i \xi_i\circ dW^i_t \right)n = 0
\,,
\label{constraintHP-stoch}
\end{equation}
where $\dd$ is brief notation for the stochastic evolution operator, which strictly speaking is an integral operator. 
The first of these stochastic processes may be written equivalently as a stochastic version of the Lagrange-to-Euler map by using the notation $g_t^*$ for pullback by the stochastic diffeomorphism $g_t$, 
\begin{equation}
\dd{g}_t - g_t^* \left(u_t(X)\,dt + \sum_i \xi_i(X)\circ dW^i_t \right) = 0
\,.
\label{constraintHP-stoch-Lag}
\end{equation}
In this form, one sees that $g_t$ is a stochastic process with time dependent drift term given by the pullback operation, $u_t(x)\,dt = g_t^* u_t(X)\,dt = u_t(g_tX)\,dt$, in which subscript $t$ on $g_t$ and $u_t(X)$ indicates that both $u_t$ and $g_t$ depend explicitly on time, $t$. Thus, the dynamical drift velocity $u_t(x)$ depends on time explicitly and also through the Lagrange-to-Euler map $x=g_tX$ governed by \eqref{constraintHP-stoch-Lag} with initial value $g_0=Id$. The Lagrange-to-Euler map in \eqref{constraintHP-stoch-Lag} also contains a Stratonovich stochastic term, comprising a finite sum over time independent spatial functions $\xi_i$, $i=1,2,\dots,N$, each composed in a Stratonovich sense (denoted by the symbol $\circ$) with its own Brownian motion in time, $dW^i_t$. This type of Stratonovich stochasticity, called ``cylindrical noise'', was introduced in \cite{Sc1988}. In the cylindrical noise term, the $\xi_i(x)$, $i=1,2,\dots,N$, are interpreted as describing the spatial correlations of the noise in fixed Eulerian space, e.g., as eigenvectors of the correlation tensor, or covariance, for a process which is assumed to be statistically stationary. 

\subsection{Stochastic Hamilton-Pontryagin approach} 

\begin{theorem}[Stochastic Hamilton--Pontryagin principle]
\label{HPtheorem-stoch}
Stochastic metamorphosis is governed by coadjoint motion represented as SPDE given by 
\begin{align}\label{EP-eqn-HP-stoch}
\begin{split}
&\dd\Big(\frac{\delta \ell}{\delta u} +
\frac{\delta \ell}{\delta \nu}\diamond n\Big) 
+ \ad_{\left(u_t(x)\,dt + \sum_i \xi_i(x)\circ dW^i_t \right)}^*\Big(\frac{\delta \ell}{\delta u} +
\frac{\delta \ell}{\delta \nu}\diamond n\Big) = 0,
\\ \\&
\dd \frac{\delta \ell}{\delta \nu} 
+ \left(u_t(x)\,dt + \sum_i \xi_i(x)\circ dW^i_t \right) \star \frac{\delta \ell}{\delta \nu} 
- \frac{\delta \ell}{\delta n}= 0,
\end{split}
\end{align}
as well as the auxiliary equation 
\begin{equation} \label{eq:aux-stoch}
\dd n = \nu + \left(u_t(x)\,dt + \sum_i \xi_i(x)\circ dW^i_t \right)n
,\end{equation}
on the space $\mathfrak{g}^*\times T^*N\times TN$ are equivalent to the following implicit variational principle,
\begin{equation}
\delta S(u,n,\dd{n},\nu,g,\dd{g})
= 0,
\label{HPimplicitvarprinc}
\end{equation}
for a stochastically constrained  action 
\begin{align}
\begin{split}
S(u,n,\dd{n},\nu,g,\dd{g})
&=
\int^1_0 \Big[\ell (u_t,n_t,\nu_t) \, dt
+ 
\Big\langle\, M\,,\,\dd{g}_tg_t^{-1} - u_t(x)\,dt - \sum_i \xi_i(x)\circ dW^i_t \,\Big\rangle
\\  &\qquad + 
\Big\langle\, \sigma\,,\,(\dd n_t - \nu_t\,dt - \Big(u_t(x)\,dt + \sum_i \xi_i(x)\circ dW^i_t \Big)n_t
)\,\Big\rangle
\Big]
\,.
\end{split}
\label{implicit-actionHP-stoch}
\end{align}
\end{theorem}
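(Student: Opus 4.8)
The plan is to mirror the proof of the deterministic Hamilton--Pontryagin principle (Theorem \ref{HPtheorem}), replacing the time derivative $\dot g\, g^{-1}$ by the stochastic evolution $\dd g\, g^{-1} = u_t(x)\,dt + \sum_i \xi_i(x)\circ dW^i_t$ throughout, and exploiting the fact that Stratonovich calculus obeys the ordinary Leibniz and chain rules, so that each manipulation of the deterministic argument carries over. First I would take the variation $\delta S$ of the stochastically constrained action \eqref{implicit-actionHP-stoch} with respect to the independent fields $u$, $\nu$, $n$, $g$, while $M$ and $\sigma$ serve as Lagrange multipliers and the prescribed spatial correlations $\xi_i$ are held fixed. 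The $\delta u$ and $\delta\nu$ terms are algebraic: since $u$ enters both constraint pairings linearly and undifferentiated, $\delta u$ collects $\delta\ell/\delta u$ from $\ell$, $-M$ from the $M$-pairing, and $\sigma\diamond n$ from the $\sigma$-pairing (via the diamond definition \eqref{eq:diamond-def}), yielding $M = \delta\ell/\delta u + \sigma\diamond n$, while $\delta\nu$ yields $\sigma = \delta\ell/\delta\nu$, exactly as in \eqref{symmHPeqns}.

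The two substantive steps are the $\delta n$ and $\delta g$ variations, each demanding a stochastic integration by parts. For $\delta n$, I would vary the pairing $\langle \sigma, \dd n - \nu\,dt - (u\,dt + \sum_i \xi_i\circ dW^i_t)n\rangle$, obtaining $\langle \sigma, \dd(\delta n) - (u\,dt + \sum_i \xi_i\circ dW^i_t)\,\delta n\rangle$, and transfer the stochastic differential off $\delta n$ onto $\sigma$. Because the noise is Stratonovich, this integration by parts introduces no It\^o correction, and using the star operation \eqref{eq:star-def} it produces the SPDE $\dd\sigma + (u\,dt + \sum_i \xi_i\circ dW^i_t)\star\sigma - \delta\ell/\delta n = 0$, the second equation of \eqref{EP-eqn-HP-stoch}. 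For $\delta g$, the key is the stochastic analog of the deterministic side calculation $\delta(\dot g\, g^{-1}) = \dot\xi - \ad_u\xi$: with $\xi := \delta g\, g^{-1}$ one has $\delta(\dd g\, g^{-1}) = \dd\xi - \ad_{\dd g\, g^{-1}}\xi$, where $\dd g\, g^{-1} = u\,dt + \sum_i \xi_i\circ dW^i_t$ on the constraint. Inserting this into $\langle M, \delta(\dd g\, g^{-1})\rangle$ and integrating by parts in the Stratonovich sense (again correction-free, with $\xi$ vanishing at the endpoints) gives $\dd M + \ad^*_{(u\,dt + \sum_i \xi_i\circ dW^i_t)}M = 0$.

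Finally I would assemble the pieces: substituting $M = \delta\ell/\delta u + \sigma\diamond n$ and $\sigma = \delta\ell/\delta\nu$ into the $M$-evolution SPDE recovers the coadjoint-motion form, the first equation of \eqref{EP-eqn-HP-stoch}, while the variations in the multipliers $M$ and $\sigma$ return the constraints \eqref{constraintHP-stoch} and hence the auxiliary equation \eqref{eq:aux-stoch}. The main obstacle is justifying the two stochastic integrations by parts together with the commutation $\delta\,\dd = \dd\,\delta$ on the group path; this is precisely where the Stratonovich convention is essential, since it guarantees that the formal calculus of the deterministic proof applies unchanged and that no quadratic-variation corrections intrude. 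I would therefore isolate and verify the single identity $\delta(\dd g\, g^{-1}) = \dd\xi - \ad_{\dd g\, g^{-1}}\xi$ as a lemma, after which the remainder is a line-by-line stochastic transcription of the proof of Theorem \ref{HPtheorem}.
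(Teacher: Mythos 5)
Your proposal is correct and follows essentially the same route as the paper's own proof: vary the stochastically constrained action \eqref{implicit-actionHP-stoch}, read off $M = \delta\ell/\delta u + \sigma\diamond n$ and $\sigma = \delta\ell/\delta\nu$ from the algebraic $\delta u$, $\delta\nu$ terms, use the side calculation $\delta(\dd g\, g^{-1}) = \dd\xi - \ad_{(u_t\,dt + \sum_i\xi_i\circ dW^i_t)}\xi$ with Stratonovich integration by parts for the $\delta g$ and $\delta n$ terms, and recover the constraints from the multiplier variations. Your only addition is to isolate the variational identity for $\delta(\dd g\,g^{-1})$ as a separate lemma, which the paper simply asserts as a side calculation.
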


\begin{remark}[Stratonovich versus It\^o representations]
In dealing with the stochastic variational principle, we will work in the Stratonovich representation, because it admits ordinary variational calculus. However, later, when we consider expected values for the solutions, we will transform to the equivalent It\^o representation. In transforming to the It\^o representation, we will discover that the effective diffusion from the It\^o contraction term is by no means a Laplacian. Instead, 
the It\^o contraction term turns out to produce a double Lie derivative with respect to the sum of vector fields $\xi_i(x)$.
\end{remark}

After this remark, we return to the proof of Theorem \ref{HPtheorem-stoch} for the Stochastic Hamilton--Pontryagin principle.
\begin{proof}
The variations of $S$ in formula (\ref{implicit-actionHP-stoch}) are given by
\begin{align}
\begin{split}
0 = \delta S 
=&
\int^1_0
\Big\langle\,\frac{\delta \ell }{\delta u} 
- M + \sigma\diamond n\,,\, \delta u\,\Big\rangle\,dt
-
\Big\langle\, \dd{\sigma} + \Big(u_t(x)\,dt + \sum_i \xi_i(x)\circ dW^i_t \Big)\star \sigma 
- \frac{\delta\ell}{\delta n}\,dt  \,,\, \delta n
\,\Big\rangle
\\&\qquad
+
\Big\langle\,\frac{\delta \ell }{\delta \nu} 
- \sigma\,,\, \delta \nu\,\Big\rangle\,dt
+
\Big\langle\, M  \,,\, \delta (\dd{g}g^{-1})\,\Big\rangle
\, dt
\,.
\end{split}
\label{implicit-actionHP}
\end{align}
In a side calculation, one finds 
\[
\delta(\dd{g}g^{-1})=\dd{\xi} - {\rm ad}_{(u_t(x)\,dt + \sum_i \xi_i(x)\circ dW^i_t)} \xi\,, \quad\hbox{with}\quad 
\xi=\delta gg^{-1}
\,,
\]
for substitution into the last term. Then, integrating by parts yields the relation
\begin{eqnarray*}
\int^1_0
\Big\langle\, M  \,,\, \delta (\dd{g}g^{-1})
\,\Big\rangle
\, dt
&=&
\int_0^1
\Big\langle\, M  \,,\, \dd{\xi} - {\rm ad}_{(u_t(x)\,dt + \sum_i \xi_i(x)\circ dW^i_t)} \xi
\,\Big\rangle
\, dt
\\&=&
\int_0^1
\Big\langle\, -\dd{M} -  {\rm ad}^*_{(u_t(x)\,dt + \sum_i \xi_i(x)\circ dW^i_t)}\,M \,,\, \eta
\,\Big\rangle
\, dt
+
\Big\langle\, M \,,\, \xi \,\Big\rangle\Big|_0^1
\,,
\end{eqnarray*}
where $\xi=\delta gg^{-1}$ vanishes at the endpoints in time. 
\smallskip

Thus, stationarity $\delta S=0$ of the Hamilton--Pontryagin variational
principle with stochastically constrained action integral \eqref{implicit-actionHP} yields
the following  set of SPDEs: 
\begin{align}
\begin{split}
&
\dd M + {\rm ad}^*_{(u_t(x)\,dt + \sum_i \xi_i(x)\circ dW^i_t)}\,M = 0
\,,\\ &
\dd \sigma
+ \Big(u_t(x)\,dt + \sum_i \xi_i(x)\circ dW^i_t\Big)\star \sigma 
- \frac{\delta\ell}{\delta n} \,dt = 0 
\,,
\end{split}
\label{symmHPeqns-stoch}
\end{align}
for the quantities
\begin{align}
M 
= \frac{\delta \ell }{\delta u} + \sigma\diamond n
\,,\quad\hbox{and}\quad
\sigma = \frac{\delta \ell }{\delta \nu} 
\,,
\label{mugammaeqns-stoch}
\end{align}
as well as the stochastic constraint equations
\begin{align}
\begin{split}
&\dd{g}g^{-1} - \Big(u_t(x)\,dt + \sum_i \xi_i(x)\circ dW^i_t\Big) = 0\,,
\\&
\dd n - \nu - \Big(u_t(x)\,dt + \sum_i \xi_i(x)\circ dW^i_t\Big)n = 0
\,.
\end{split}
\label{constraintHPeqns-stoch}
\end{align}
This finishes the proof of the Hamilton--Pontryagin principle for stochastic metamorphosis formulated in Theorem \ref{HPtheorem-stoch}.
\end{proof}

\subsection{Stochastic Hamiltonian formulation}

By Corollary  \ref{EPcorollary-coadjoint}, the stochastic equations \eqref{symmHPeqns-stoch} through \eqref{constraintHPeqns-stoch} above imply the corresponding stochastic versions of  in \eqref{eq:meta.2-thm} and \eqref{eq:aux-thm} in Theorem \ref{equiv-thm-det}. Namely,
\begin{align}
\label{eq:meta.2-thm-stoch}
\begin{split}
&\dd \frac{\delta \ell}{\delta u} + \ad^*_{\tilde u}
\frac{\delta \ell}{\delta u} + \frac{\delta \ell}{\delta n} \diamond n\,dt
+ \frac{\delta \ell}{\delta \nu} \diamond \nu\,dt = 0,
\\ \\&
\dd\frac{\delta \ell}{\delta \nu} + \mathcal{L}^T_{\tilde u} \frac{\delta \ell}{\delta \nu} - \frac{\delta \ell}{\delta n}\,dt
= 0,
\\ \\&
\dd n =  \mathcal{L}_{\tilde u}n_t + \nu \,dt
\end{split}
\end{align}
with Stratonovich stochastic transport velocity $\tilde u$ given by
\begin{equation} \label{eq:trans-stoch-vel}
\tilde u := u_t(x)\,dt + \sum_i \xi_i(x)\circ dW^i_t
\,.
\end{equation}

At this point the Hamiltonian structure of the deterministic metamorphosis equations reveals how we can write the stochastic metamorphosis equations in Hamiltonian form. Namely, we deform the deterministic Hamiltonian by adding the stochastic part 
to it as being linear in the momentum $\mu$ and paired with the Stratonovich noise perturbation, 
\begin{equation} \label{eq:trans-stoch-Ham}
\tilde h := h(\mu. \nu, n)\,dt + \Big\langle \mu\,,\,\sum_i \xi_i(x)\circ dW^i_t\Big\rangle
\,.
\end{equation}
We then use the same Lie-Poisson Hamiltonian structure as in the deterministic case. 

Assembling the metamorphosis equations \eqref{eq:meta.2-thm-stoch} into the Hamiltonian form \eqref{Ham-form-met}
gives, 
\begin{equation}\label{Ham-matrix-met-stoch}
\begin{bmatrix}
\dd \mu
\\
\dd \sigma
\\
\dd n
\end{bmatrix}
=
-
\begin{bmatrix}
 \ad^*_{\Box} \mu & -\, \sigma  \diamond  \Box &  \Box \diamond n
\\
\mathcal{L}^T_{\Box} \sigma   &   0    &  -1
\\
- \,\mathcal{L}_{\Box}n           & 1      &   0
\end{bmatrix}
\begin{bmatrix}
{\delta \tilde{h}/\delta \mu}\,(=\tilde u := u_t(x)\,dt + \sum_i \xi_i(x)\circ dW^i_t)
\\
{\delta \tilde{h}/\delta \sigma} = {\delta h/\delta \sigma}\,(=\nu)
\\
{\delta \tilde{h}/\delta \sigma} = {\delta h/\delta n}
\end{bmatrix}
.\end{equation}
As before, the operators in the Hamiltonian matrix act to the right on all terms in a product by the chain rule.

By the change of variables corresponding to \eqref{Ham-change of variables} for this stochastic case,
\begin{equation}\label{Ham-change of variables-stoch}
\tilde{h}(\mu,\sigma,n) = \widetilde{H}(M,\sigma,n)
= {H}(M,\sigma,n)\,dt + + \Big\langle M\,,\,\sum_i \xi_i(x)\circ dW^i_t\Big\rangle
\,,
\end{equation}
one finds that the Lie-Poisson structure in \eqref{Ham-matrix-met-stoch} transforms into
\begin{equation}\label{Ham-matrix-met-untangled-stoch}
\begin{bmatrix}
\dd M
\\
\dd  \sigma
\\
\dd n
\end{bmatrix}
=
-
\begin{bmatrix}
 {\rm ad}^*_{\Box} M & 0 & 0
\\
0   &   0    &  1
\\
0    & -1    &   0
\end{bmatrix}
\begin{bmatrix}
{\delta \widetilde{H}/\delta M}\,(=\widetilde{u})
\\
{\delta \widetilde{H}/\delta \sigma}+ \mathcal{L}_{\delta \widetilde{H}/\delta M}n
\,(=\nu+ \mathcal{L}_{\widetilde{u}}n)
\\
-{\delta \widetilde{H}/\delta n} - \mathcal{L}^T_{\delta \widetilde{H}/\delta M}\sigma\,
(=-{\delta \widetilde{H}/\delta n} - \mathcal{L}^T_{\widetilde{u}}\sigma)
\end{bmatrix},
\end{equation}
and thereby recovers equations \eqref{symmHPeqns-stoch} - \eqref{constraintHPeqns-stoch} in Hamiltonian form. 

\begin{remark}
The advantage of writing equations \eqref{Ham-matrix-met-untangled-stoch} in terms of the total momentum 1-form density $M:=\mathbf{M}\cdot d\mathbf{x}\otimes d^3x$ may be seen by recalling that ${\rm ad}^*_{\widetilde{u}} M = \mathcal{L}_{\widetilde{u}} M$ for 1-form densities. Consequently, the first equation in \eqref{Ham-matrix-met-untangled-stoch} implies $(\dd + \mathcal{L}_{\widetilde{u}})M=0$, which in turn implies that
\begin{equation}\label{Advect-M}
\dd (g^*M) = g^* \Big((\dd + \mathcal{L}_{\widetilde{u}})M\Big) = 0\,.
\end{equation} 
This means  the total momentum 1-form density $M:=\mathbf{M}\cdot d\mathbf{x}\otimes d^3x$ is preserved by the stochastic flow given by the pullback of the stochastic diffeomorphism $g_t$ in \eqref{constraintHP-stoch-Lag}, which is the flow of the stochastic vector field $\widetilde{u}$. That is, the stochastic Lagrange-to-Euler flow $g_t$, which is the solution of $\dd g g^{-1}=\widetilde{u}$ in \eqref{constraintHP-stoch}, 
\begin{equation}
\dd{g}_t - g_t^*\widetilde{u}
=
\dd{g}_t - g_t^* \left(u_t(X)\,dt + \sum_i \xi_i(X)\circ dW^i_t \right) = 0
\,,
\label{constraintHP-stoch-Lag-redux}
\end{equation} 
preserves the quantity $M$ along its flow. Hence, we say that the total momentum 1-form density $M:=\mathbf{M}\cdot d\mathbf{x}\otimes d^3x$ is \emph{stochastically advected}.
\end{remark}

{\bf Summary.} The preservation of the Hamiltonian structure achieved in \eqref{Ham-matrix-met-stoch} for the present formulation of the stochastic metamorphosis equations provides the interpretation of the stochastic part of the flow. The Hamiltonian flow of the momentum $\langle \mu\,,\,\sum_i \xi_i(x)\circ dW^i_t\rangle$ produces  stochastic translation in Eulerian space with velocity $\sum_i \xi_i(x)\circ dW^i_t$. Thus, adding the stochastic part, linear in the momentum, to the metamorphosis Hamiltonian $h(\mu,\nu,n)$ adds a stochastic transport to the deterministic flow. This is consistent with our intention of modelling stochastic metamorphosis as motion generated by a temporally stochastic flow on the diffeomorphisms, with spatial correlations given by the prescribed, time-independent correlation eigenvectors determined from data assimilation. 

\subsection{It\^o representation}
In preparation for writing the It\^o representation, we first substitute ${\rm ad}_u^*\mu=\mathcal{L}_u\mu$ to show in the more familiar Lie derivative notation the action of the vector field $u$ on its dual momentum, the 1-form density $\mu={\delta \ell}/{\delta u}$.
The equivalent It\^o representations of equations \eqref{eq:meta.2-thm-stoch} are then given by 
\begin{align}
\label{eq:meta.2-thm-stoch}
\begin{split}
&\dd \frac{\delta \ell}{\delta u} 
+ \mathcal{L}_{u} \frac{\delta \ell}{\delta u} \,dt
+ \sum_i \mathcal{L}_{\xi_i(x)} \frac{\delta \ell}{\delta u}  dW^i_t
-
\frac12 \sum_i \mathcal{L}_{\xi_i(x)}\Big(\mathcal{L}_{\xi_i(x)}
\frac{\delta \ell}{\delta u}\Big)\,dt
+ \frac{\delta \ell}{\delta n} \diamond n\,dt
+ \frac{\delta \ell}{\delta \nu} \diamond \nu\,dt = 0,
\\ \\&
\dd\frac{\delta \ell}{\delta \nu} 
+ \mathcal{L}^T_{u} \frac{\delta \ell}{\delta \nu} \,dt
+ \sum_i \mathcal{L}^T_{\xi_i(x)} \frac{\delta \ell}{\delta \nu}  dW^i_t
-
\frac12 \sum_i \mathcal{L}^T_{\xi_i(x)}\Big(\mathcal{L}^T_{\xi_i(x)}
\frac{\delta \ell}{\delta \nu}\Big)\,dt
- \frac{\delta \ell}{\delta n}\,dt
= 0,
\\ \\&
\dd n =   \mathcal{L}_{u} n_t\,dt
+ \sum_i \mathcal{L}_{\xi_i(x)} n_t \,dW^i_t
-
\frac12 \sum_i \mathcal{L}_{\xi_i(x)}\Big(\mathcal{L}_{\xi_i(x)}
n_t\Big)\,dt
+ \nu \,dt,
\end{split}
\end{align}
with stochastic transport velocity $\widehat u$ given in It\^o form by
\begin{equation} \label{eq:trans-stoch-vel}
\widehat u := u_t(x)\,dt + \sum_i \xi_i(x) dW^i_t
\,,
\end{equation}
plus the It\^o contraction drift terms. Likewise, the stochastic advection of the total momentum 1-form density $M = \mu + \sigma\diamond n =\mathbf{M}\cdot d\mathbf{x}\otimes d^3x$, expressed in Stratonovich form as $(\dd + \mathcal{L}_{\widetilde{u}})M=0$, is expressed in It\^o form as
\begin{equation}
\label{eq:M-stochadvect}
\dd M 
+ \mathcal{L}_{u} M \,dt
+ \sum_i \mathcal{L}_{\xi_i(x)} M  dW^i_t
-
\frac12 \sum_i \mathcal{L}_{\xi_i(x)}\Big(\mathcal{L}_{\xi_i(x)}
M\Big)\,dt = 0\,,
\end{equation}
In which the last sum is the It\^o contraction term. 

In It\^o form, the expectation of the noise terms vanish. The noise interacts multiplicatively with both the solution $ \mathbf{M}$ and the gradient of the solution $\nabla  \mathbf{M}$,  through the Lie derivative, as
\begin{equation}
\label{eq:M-stochad-vector}
\sum_i \mathcal{L}_{\xi_i(x)} M  dW^i_t = 
\sum_i \Big\{\Big[ \big(\bsxi_i(\bx)\cdot\nabla\big) \mathbf{M}  
+ \big(\nabla\bsxi_i(\bx)\big) ^T\cdot \mathbf{M}
+ \mathbf{M} \,{\rm div}\bsxi_i(\bx) 
\Big]dW^i_t\Big\} \cdot d\mathbf{x}\otimes d^3x
.\end{equation}
Likewise, the It\^o contraction drift terms are not Laplacians, as would have been the case for additive noise with constant amplitude. Instead, in \eqref{eq:M-stochadvect} they are sums over double Lie derivatives with respect to the vector fields $\xi_i(x)$ associated with the spatial correlations of the stochastic perturbation. This double Lie derivative combination was called the \emph{Lie Laplacian} in \cite{holm2015variational} has many properties of potential interest in the mathematical analysis of these equations \cite{CrFlHo2017}.

\subsection*{Acknowledgements.}
I am grateful to A. Trouv\'e and L. Younes for their collaboration in developing the Euler-Poincar\'e description of metamorphosis. I am also grateful to F. Gay-Balmaz, T. S. Ratiu and C. Tronci for many illuminating collaborations in complex fluids and other topics in geometric mechanics during the course of our long friendship. Finally, I am also grateful to M. I. Miller and D. Mumford for their encouragement over the years to pursue the role of EPDiff in imaging science. 
During the present work the author was  partially supported by the European Research Council Advanced Grant 267382 FCCA and the EPSRC Grant EP/N023781/1.

\bibliographystyle{plainnat}
\bibliography{defor,biblio,ss}

\end{document}